\documentclass[runningheads]{llncs}

\usepackage[utf8]{inputenc}
\usepackage{amsmath,amssymb}
\usepackage{mathtools}
\usepackage{microtype}
\usepackage{enumitem}
\usepackage{float}
\usepackage{hyperref}
\usepackage[capitalize,noabbrev]{cleveref}

\newcommand{\E}{\mathbb{E}}
\newcommand{\vH}{v_H}
\newcommand{\vL}{v_L}
\newcommand{\dtil}{\tilde d}
\newcommand{\ask}{\mathrm{ask}}
\newcommand{\bid}{\mathrm{bid}}

\title{The Privacy Subsidy in Glosten-Milgrom: \\
Bid-Ask Spread and Welfare under Flip-Noise Direction Observation}
\titlerunning{The Privacy Subsidy in Glosten-Milgrom}
\author{Yuki Nakamura\orcidID{0009-0001-7174-6737}}
\authorrunning{Y. Nakamura}
\institute{The Open University of Japan}

\begin{document}
\maketitle

\begin{abstract}
\begin{sloppypar}
We derive a closed-form bid-ask spread and welfare decomposition
for the Glosten-Milgrom 1985 sequential-trading model when the
market maker observes the trade direction perturbed by a binary
flip channel of probability $\eta$ --- a natural information-theoretic
model of privacy mechanisms acting on the direction signal.
Under a committed Bayesian market-maker pricing rule, the
equilibrium spread is $\mu(1-2\eta) \Delta$, where $\mu$ is the
informed-trader fraction and $\Delta = v_H - v_L$ the value range.
The welfare decomposition identifies a per-trade transfer
$\mu\eta\Delta$ from the protocol's liquidity pool to traders ---
the \emph{privacy subsidy}, mirroring the Gaussian-Kyle analog
established in prior work. The result extends the privacy-subsidy
concept from continuous Gaussian to discrete two-state
microstructure, demonstrating robustness across both classical
models. Primary application: MPC-based matching engines with
$\varepsilon$-differentially-private direction disclosure, where
the engine prices on a noisy direction signal.
\end{sloppypar}

\keywords{Market microstructure \and Glosten-Milgrom \and
Privacy mechanisms \and Adverse selection}
\end{abstract}

\section{Introduction}
\label{sec:intro}

The Glosten-Milgrom (GM) 1985 model is the canonical
discrete-state account of how bid-ask spreads arise from
information asymmetry between informed and uninformed traders
\cite{glosten1985}. A risk-neutral market maker (MM) sets bid and
ask prices conditional on the observed direction of an arriving
trade, and the resulting spread is the adverse-selection cost the
MM must charge to break even against the informed trader.

\begin{sloppypar}
In the textbook GM model, the MM observes the trade direction
$d \in \{\text{buy}, \text{sell}\}$ exactly. In contemporary
privacy-preserving exchange designs --- sealed-bid request-for-quote
(RFQ) systems, multi-party computation (MPC)-based matching,
zero-knowledge order routing --- the MM's signal of trade direction
may be noised by the privacy mechanism. We model this leakage
information-theoretically as a binary symmetric channel: the MM
observes $\dtil = d$ with probability $1 - \eta$ and the opposite
direction with probability $\eta \in [0, 1/2]$.
\end{sloppypar}

Under the natural \emph{committed Bayesian} MM pricing rule (the
MM prices via $\ask = \E[v \mid \dtil = \text{buy}]$ and
$\bid = \E[v \mid \dtil = \text{sell}]$, not constrained to
zero-profit on actual trade direction), we derive in closed form:

\begin{enumerate}[leftmargin=*]
\item The modified spread:
$\mathrm{spread} = \mu(1 - 2\eta) \cdot \Delta$
where $\mu$ is the informed-trader fraction and $\Delta = \vH - \vL$
the asset's value range (\Cref{thm:spread}).

\item The welfare decomposition:
$\mu\pi_I + (1-\mu)\pi_N + \pi_M = 0$ with closed-form per-trade quantities,
yielding a \emph{privacy subsidy}
$|\pi_M^{\mathrm{GM}}| = \mu \eta \Delta$ paid by the
protocol/LP pool to traders (\Cref{thm:welfare}).

\item Comparative statics and a counter-intuitive corollary
mirroring our prior Kyle / Gaussian analog
\cite{nakamura2026privacysubsidy}: noise traders gain from
privacy alongside informed traders, with the protocol bearing
the entire cost.
\end{enumerate}

\paragraph{Positioning.}
Our result extends the privacy-subsidy concept established for
continuous-Gaussian Kyle markets~\cite{nakamura2026privacysubsidy}
to discrete two-state GM. The closest contemporary work is the
information-thermodynamic analysis of GM by Touzo, Marsili, and
Zagier~\cite{touzo2020information}, which derives a market-
temperature bound on informed-trader gain under \emph{exact} MM
observation. Our flip-noise extension yields a concrete closed-form
welfare quantity (the privacy subsidy) under noisy observation;
the relationship to their thermodynamic upper bound under flip
noise is suggestive but not pursued here.

\paragraph{Roadmap.}
\Cref{sec:related} positions against the relevant literature.
\Cref{sec:setup} sets up the model. \Cref{sec:spread} states and
proves the modified bid-ask spread (\Cref{thm:spread}).
\Cref{sec:welfare} derives the welfare decomposition and
identifies the privacy subsidy (\Cref{thm:welfare}).
\Cref{sec:cs} gives comparative statics including the counter-
intuitive noise-trader corollary. \Cref{sec:application} maps the
result to MPC- and RFQ-style privacy mechanisms.
\Cref{sec:lvr} places the privacy subsidy in the same family as
Loss-Versus-Rebalancing. \Cref{sec:conclusion} concludes.

\section{Related work}
\label{sec:related}

\paragraph{GM lineage.}
Glosten and Milgrom~\cite{glosten1985} establish the two-state
sequential-trading model with a competitive zero-profit MM.
Glosten and Harris~\cite{glosten1988} provide an empirical
decomposition of the spread.
Easley and O'Hara~\cite{easleyohara1992} analyze information
revelation through trade timing.
Brahma et al.~\cite{brahma-bayesian-mm} introduce a sequential Bayesian
MM that updates beliefs after each trade; our committed
Bayesian-AMM pricing rule inherits this lineage but departs from
zero-profit (motivated by smart-contract / committed-mechanism
interpretation).
Das~\cite{das2005learning} considers an online-learning MM in the
GM model; algorithmic, distinct from our closed-form result.
Huddart, Hughes, and Levine~\cite{huddart2001disclosure} modify
Kyle with ex-post insider disclosure --- the dual of our setup
(trader-side imperfection vs MM-side).

\paragraph{Information-theoretic GM.}
Touzo, Marsili, and Zagier~\cite{touzo2020information} map GM to
a Szilárd information engine: they define a market temperature
from the distribution of orders and bound the informed trader's
expected gain by temperature times information. Their analysis
uses exact MM observation and yields a thermodynamic upper bound;
ours derives a concrete closed-form welfare loss under a specific
binary-symmetric-channel privacy mechanism. The two results are
complementary in scope --- their bound applies in any GM setting
with exact observation, while ours quantifies the protocol cost
under a specific noise channel. Carmier~\cite{carmier2022thermo}
extends the Touzo--Marsili--Zagier framework to finite horizon by
deriving an upper bound on informed-trader expected gain that is
fully analogous to a generalised second law of thermodynamics. A
precise comparison (whether our $\mu\eta\Delta$ subsidy saturates
Carmier's finite-horizon bound under flip noise, with channel
capacity reduced from one bit to $1 - h(\eta)$ bits) is a natural
next step and left to future work.

\paragraph{Anonymity and dark-pool literature.}
Zhu~\cite{zhu2014darkpools} analyzes dark-pool trading as a
routing equilibrium between lit and dark venues; Buti, Rindi, and
Werner~\cite{buti2017darkpool} extend this. The mechanism is
routing-based segmentation, mathematically distinct from our
within-venue direction noising.

\paragraph{Connection to the Kyle / Gaussian-noise companion.}
Nakamura~\cite{nakamura2026privacysubsidy} establishes the
privacy-subsidy concept for continuous-Gaussian Kyle markets
under additive observation noise. The result is
$|\pi_M^{\mathrm{Kyle}}| = \sigma_v \sigma_\varepsilon^2 /
(2\sqrt{\sigma_u^2 + \sigma_\varepsilon^2})$. The present paper
provides the discrete-GM analog under binary-channel noise,
$|\pi_M^{\mathrm{GM}}| = \mu \eta \Delta$.

\section{Model setup}
\label{sec:setup}

A single risky asset has value $v \in \{\vH, \vL\}$ with prior
$P(v = \vH) = \pi$. We work in the symmetric case $\pi = 1/2$
throughout; the asymmetric extension is routine and reported in
\Cref{rem:asymmetric}. Define $\Delta := \vH - \vL > 0$.

\paragraph{Trader arrival.}
At each round, a single trader arrives. With probability
$\mu \in (0, 1)$, the trader is \emph{informed} (knows $v$); with
probability $1 - \mu$, the trader is \emph{uninformed} (a noise
trader). The informed trader chooses direction
$d \in \{\text{buy}, \text{sell}\}$ optimally; the noise trader
chooses direction uniformly at random.

\paragraph{Privacy channel.}
The MM does not observe the actual direction $d$. Instead, the MM
observes $\dtil \in \{\text{buy}, \text{sell}\}$, which equals $d$
with probability $1 - \eta$ and the opposite direction
with probability $\eta \in [0, 1/2]$. The privacy parameter $\eta$
is the flip-noise probability of the binary symmetric channel
modeling the privacy mechanism.

\paragraph{Timing and pricing rule.}
The MM commits ex ante to a two-sided quote whose levels are set
from the Bayesian posterior given the noisy direction signal:
\begin{align}
\ask(\dtil = \text{buy})
  &= \E[v \mid \dtil = \text{buy}],
\quad
\bid(\dtil = \text{sell})
  = \E[v \mid \dtil = \text{sell}].
\end{align}
An arriving trader transacts on its \emph{true} direction $d$: a
buy executes at the ask, a sell at the bid. The privacy channel
does not change which side the trader fills; it only degrades how
informative $\dtil$ is about $d$, moving both quote levels toward
the prior midpoint (the spread collapses to zero at $\eta = 1/2$).
The MM is a \emph{committed Bayesian AMM}: it commits to this rule
and bears the realized expected loss on the executed flow, without
a zero-profit constraint. Relative to classical GM, in which the
posted quote levels are set from $\E[v \mid d]$, here they are set
from the noise-coarsened posterior $\E[v \mid \dtil]$; the
adverse-selection mechanism is the same, only the maker's
information is coarsened. The committed-Bayesian framing parallels
the Kyle / Gaussian-noise companion~\cite{nakamura2026privacysubsidy}
(see \Cref{rem:competitive}).

\section{The modified spread}
\label{sec:spread}

\begin{theorem}[Spread under flip-noise observation]
\label{thm:spread}
Under the model of \Cref{sec:setup} with symmetric prior
$\pi = 1/2$, informed fraction $\mu$, and flip probability
$\eta \in [0, 1/2]$,
\begin{equation}
\mathrm{spread} \;=\; \ask - \bid \;=\; \mu(1 - 2\eta) \cdot \Delta.
\label{eq:spread}
\end{equation}
\end{theorem}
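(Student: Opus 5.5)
The plan is a direct Bayesian computation: find the distribution of the true direction $d$ given $v$, push it through the binary symmetric channel to get the distribution of $\dtil$ given $v$, and apply Bayes' rule with the symmetric prior $\pi=1/2$.

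\emph{Step 1: informed behaviour.} First fix what the informed trader does. Under the committed quotes both $\ask$ and $\bid$ lie in $[\vL,\vH]$. Hence when $v=\vH$ the informed trader weakly prefers buying at the ask, and when $v=\vL$ selling at the bid; I will take these as the optimal directions, strictly so whenever $\eta<1/2$.

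\emph{Step 2: distribution of $d$.} The noise trader buys or sells with probability $1/2$ each, independently of $v$. Therefore
\[
P(d=\text{buy}\mid \vH)=\mu+\tfrac{1-\mu}{2}=\tfrac{1+\mu}{2},
\qquad
P(d=\text{buy}\mid \vL)=\tfrac{1-\mu}{2}.
\]

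\emph{Step 3: pass through the channel.} For a generic probability $p$ we have
\[
P(\dtil=\text{buy})=(1-\eta)\,p+\eta\,(1-p)=\eta+(1-2\eta)\,p.
\]
Substituting $p=\tfrac{1\pm\mu}{2}$ gives
\[
P(\dtil=\text{buy}\mid \vH)=\tfrac{1+\mu(1-2\eta)}{2},
\qquad
P(\dtil=\text{buy}\mid \vL)=\tfrac{1-\mu(1-2\eta)}{2}.
\]
Write $\kappa:=\mu(1-2\eta)$ for the effective informativeness. The channel simply replaces $\mu$ by $\kappa$ in the textbook GM likelihoods.

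\emph{Step 4: posteriors and the spread.} With the symmetric prior, $P(\dtil=\text{buy})=1/2$. Bayes' rule then gives
\[
P(\vH\mid\dtil=\text{buy})=\tfrac{1+\kappa}{2},
\]
so
\[
\ask=\vL+\tfrac{1+\kappa}{2}\,\Delta=\tfrac{\vH+\vL}{2}+\tfrac{\kappa}{2}\,\Delta.
\]
By the buy/sell symmetry, $\bid=\tfrac{\vH+\vL}{2}-\tfrac{\kappa}{2}\,\Delta$. Subtracting yields $\ask-\bid=\kappa\Delta=\mu(1-2\eta)\Delta$, which is \eqref{eq:spread}.

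The main obstacle is Step 1. Because the quotes depend on the informed trader's strategy, one might worry about circularity. The resolution is that any quotes of the form $\E[v\mid\cdot]$ lie in $[\vL,\vH]$, so the informed trader's best response does not depend on the exact levels. The edge case $\eta=1/2$ is also harmless: there the quotes coincide at the midpoint, the trader is indifferent, and the spread is $0$ regardless of which direction he chooses, consistent with the formula.
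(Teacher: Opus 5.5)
Your proof is correct and is essentially the paper's argument: the buy probabilities $(1\pm\mu)/2$ given $v$ are pushed through the binary symmetric channel to $(1\pm\mu(1-2\eta))/2$, and Bayes under the symmetric prior gives $\ask,\bid=(\vH+\vL)/2\pm\mu(1-2\eta)\Delta/2$; your Step~1 best-response check is an addition the paper simply takes as given. One small slip there: at $\eta=1/2$ the informed trader is not indifferent, since both quotes sit at $(\vH+\vL)/2$ and an informed trader with $v=\vH$ earns $\Delta/2$ by buying versus $-\Delta/2$ by selling (indeed $2\vH-\ask-\bid=\Delta>0$ for every $\eta$), though, as you note, this does not affect the spread.
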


\begin{proof}
Set $s := \mu(1 - 2\eta)$. The conditional probabilities of true
direction given value are
$P(d = \text{buy} \mid \vH) = (1+\mu)/2$,
$P(d = \text{buy} \mid \vL) = (1-\mu)/2$,
since the informed trader (prob.\ $\mu$) buys deterministically
on $\vH$ and sells on $\vL$, while the noise trader (prob.\
$1-\mu$) buys with probability $1/2$.

For the observed direction:
\begin{align*}
P(\dtil = \text{buy} \mid \vH)
&= P(d = \text{buy} \mid \vH)(1 - \eta)
   + P(d = \text{sell} \mid \vH)\,\eta \\
&= \tfrac{1 + \mu}{2}(1 - \eta) + \tfrac{1 - \mu}{2}\,\eta
 = \tfrac{1 + \mu(1 - 2\eta)}{2}
 = \tfrac{1 + s}{2}.
\end{align*}
By the same calculation,
$P(\dtil = \text{buy} \mid \vL) = (1 - s)/2$.

Under symmetric prior, $P(\dtil = \text{buy}) = 1/2$, so by Bayes
$P(\vH \mid \dtil = \text{buy}) = (1+s)/2$ and
$P(\vH \mid \dtil = \text{sell}) = (1-s)/2$.

Therefore
\begin{align*}
\ask &= \vH \cdot \tfrac{1+s}{2} + \vL \cdot \tfrac{1-s}{2}
      = \tfrac{\vH + \vL}{2} + \tfrac{s \Delta}{2}, \\
\bid &= \vH \cdot \tfrac{1-s}{2} + \vL \cdot \tfrac{1+s}{2}
      = \tfrac{\vH + \vL}{2} - \tfrac{s \Delta}{2},
\end{align*}
yielding $\ask - \bid = s \Delta = \mu(1 - 2\eta) \Delta$. \qed
\end{proof}

\begin{remark}[Sanity: no-privacy limit]
\label{rem:sanity}
Setting $\eta = 0$ recovers the textbook GM spread
$\mu \Delta$. Setting $\eta = 1/2$ gives spread $= 0$:
under perfect privacy noise (50\% flip), the MM's observation
$\dtil$ is independent of $v$, so the posterior collapses to the
prior and the MM posts a single mid-price.
\end{remark}

\begin{remark}[The subsidy is intrinsic to coarse-signal pricing]
\label{rem:competitive}
\begin{sloppypar}
The subsidy is not a feature of the committed-pricing label but of
the maker's coarsened observation. A maker that could condition on
the \emph{true} direction would set
$\ask_{\mathrm{comp}} = \E[v \mid d = \text{buy}]
 = (\vH+\vL)/2 + \mu\Delta/2$, the textbook GM quote with spread
$\mu\Delta$ independent of $\eta$, and would break even. But a
maker that observes only the noisy signal $\dtil$ cannot condition
on $d$: zero profit conditional on its actual information forces
$\ask = \E[v \mid \dtil = \text{buy}]$, exactly the committed
Bayesian quote, with spread $\mu(1-2\eta)\Delta$ and expected loss
$|\pi_M^{\mathrm{GM}}| = \mu\eta\Delta$ against the true flow. The
subsidy is thus the generic cost of pricing on a flipped (coarser)
direction signal, paralleling the Kyle / Gaussian-noise
companion~\cite{nakamura2026privacysubsidy}; it vanishes only if
the maker can price on the un-flipped direction, which the privacy
mechanism precludes.
\end{sloppypar}
\end{remark}

\section{Welfare decomposition: the privacy subsidy}
\label{sec:welfare}

We compute the per-trade expected P\&L for each agent.

\begin{lemma}[Per-agent P\&L formulas]
\label{lem:pl}
Under the equilibrium of \Cref{thm:spread}:
\begin{align}
\pi_I &= +\tfrac{1 - \mu(1 - 2\eta)}{2} \cdot \Delta
      & &\text{(informed)},
      \label{eq:pi-I} \\
\pi_N &= -\tfrac{\mu(1 - 2\eta)}{2} \cdot \Delta
      & &\text{(noise trader)},
      \label{eq:pi-N} \\
\pi_M &= -\mu \eta \Delta
      & &\text{(MM / protocol)}.
      \label{eq:pi-M}
\end{align}
The three components are zero-sum: $\mu \pi_I + (1-\mu) \pi_N +
\pi_M = 0$.
\end{lemma}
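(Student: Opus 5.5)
The plan is a direct computation from the quote levels obtained in the proof of \Cref{thm:spread}. With $s := \mu(1-2\eta)$ and midpoint $m := (\vH+\vL)/2$, we have $\ask = m + s\Delta/2$ and $\bid = m - s\Delta/2$. Per the timing convention of \Cref{sec:setup}, a trader fills on its \emph{true} direction $d$: a buy pays the ask, a sell receives the bid. The signal $\dtil$ only determined where these levels sit. So every expected P\&L reduces to averaging $v - \ask$ over buys and $\bid - v$ over sells, under the joint law of $(v,d)$. The flip variable $\dtil$ no longer enters once the quotes are fixed.

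\textbf{Informed trader.} First I will confirm that the prescribed strategy (buy on $\vH$, sell on $\vL$) is optimal given the quotes. Since $0 \le s \le 1$, we have $\bid \le \ask \le \vH$ and $\vL \le \bid$, so buying on $\vH$ yields $\vH - \ask \ge 0$ and selling on $\vL$ yields $\bid - \vL \ge 0$. On $v=\vH$ the profit is $\vH - \ask = \Delta/2 - s\Delta/2$. By symmetry the profit on $v=\vL$ is $\bid - \vL$, which is the same number. Averaging over the symmetric prior gives \eqref{eq:pi-I}.

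\textbf{Noise trader.} Its direction is independent of $v$, so conditional on either direction $\E[v]=m$. A buy loses $\ask - m = s\Delta/2$ and a sell loses $m - \bid = s\Delta/2$. This gives \eqref{eq:pi-N} regardless of the coin.

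\textbf{Market maker and zero-sum.} Each executed trade is a bilateral transfer between the arriving trader and the MM at the quoted price. Hence per trade $\pi_M = -\bigl(\mu\pi_I + (1-\mu)\pi_N\bigr)$, which is exactly the zero-sum identity. Substituting, I get
\[
\pi_M = -\tfrac{\Delta}{2}\bigl[\mu(1-s) - (1-\mu)s\bigr] = -\tfrac{\Delta}{2}(\mu - s) = -\tfrac{\Delta}{2}\cdot 2\mu\eta = -\mu\eta\Delta,
\]
which is \eqref{eq:pi-M}. As an independent check, I will also compute $\pi_M$ directly as $\E[(\ask - v)\mathbf{1}\{d=\text{buy}\} + (v-\bid)\mathbf{1}\{d=\text{sell}\}]$. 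This uses $P(d=\text{buy}\mid \vH) = (1+\mu)/2$ and $P(d=\text{buy}\mid \vL) = (1-\mu)/2$ from the proof of \Cref{thm:spread}, and it should reproduce $-\mu\eta\Delta$.

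The only delicate step is the bookkeeping convention: the MM's loss arises because the quotes are set from $\E[v\mid\dtil]$ while fills occur on $d$. One must not erroneously condition the execution price on $\dtil$. If one did, the MM would break even and $\pi_M$ would be $0$. I will state this convention explicitly before computing.
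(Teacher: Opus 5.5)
Your proposal is correct and follows the paper's proof essentially line by line. You read the informed profit $(1-s)\Delta/2$ off the quotes in each state, take the noise-trader loss to be the half-spread $s\Delta/2$, and recover $\pi_M=\tfrac{\Delta}{2}(s-\mu)=-\mu\eta\Delta$ from the bilateral zero-sum identity. Your two additions are sound and give the zero-sum claim independent content, which it lacks in the paper because there $\pi_M$ is defined through that identity: the best-response check (buy on $\vH$, sell on $\vL$ is optimal since $0\le s\le 1$), and the direct computation of $\pi_M$, which does give $\tfrac{1+\mu}{2}(\ask-\vH)+\tfrac{1-\mu}{2}(\vH-\bid)=\tfrac{\Delta}{2}(s-\mu)$ conditional on $\vH$ and the same conditional on $\vL$.
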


\begin{proof}
Set $s = \mu(1 - 2\eta)$ as before. The informed trader, knowing
$v$, trades the profitable direction and realizes
\begin{align*}
\pi_I &= \begin{cases}
\vH - \ask = \Delta/2 - s\Delta/2 & \text{when } v = \vH, \\
\bid - \vL = \Delta/2 - s\Delta/2 & \text{when } v = \vL,
\end{cases}
\end{align*}
so $\pi_I = (1 - s)\Delta/2$, establishing \eqref{eq:pi-I}.

The noise trader, choosing direction uniformly and independently
of $v$, has per-trade gain
\[
\pi_N
= \tfrac{1}{2}(v - \ask) + \tfrac{1}{2}(\bid - v)
= \tfrac{\bid - \ask}{2}
= -\tfrac{\mathrm{spread}}{2}
= -\tfrac{s \Delta}{2},
\]
independent of $v$, establishing \eqref{eq:pi-N}.

By the zero-sum identity (each trade is between trader and MM),
\[
\pi_M = -[\mu\,\pi_I + (1-\mu)\,\pi_N]
      = -\mu \cdot \tfrac{(1-s)\Delta}{2}
        + (1-\mu) \cdot \tfrac{s\Delta}{2}
      = \tfrac{\Delta}{2}(s - \mu).
\]
Substituting $s = \mu(1 - 2\eta)$ gives $s - \mu = -2\mu\eta$, so
$\pi_M = -\mu\eta\Delta$, establishing \eqref{eq:pi-M}. \qed
\end{proof}

\begin{theorem}[Privacy subsidy in GM]
\label{thm:welfare}
Under the model of \Cref{sec:setup}, the MM's expected loss per
trade is
\begin{equation}
|\pi_M^{\mathrm{GM}}| \;=\; \mu \eta \Delta \;\geq\; 0,
\label{eq:subsidy}
\end{equation}
with equality if and only if $\eta = 0$. The quantity
$|\pi_M^{\mathrm{GM}}|$ is the \emph{privacy subsidy in
Glosten-Milgrom}: the per-trade transfer from the protocol/LP
pool to traders, induced by the privacy mechanism.
For protocol break-even, the total per-trade fee revenue must
satisfy $\mathrm{fees}_{\text{trade}} \geq \mu\eta\Delta$.
\end{theorem}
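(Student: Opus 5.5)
The result follows almost directly from \Cref{lem:pl}. I would take the MM's per-trade expected P\&L, $\pi_M = -\mu\eta\Delta$, as given by \eqref{eq:pi-M}. Since $\mu\in(0,1)$, $\eta\in[0,1/2]$ and $\Delta>0$, this quantity is nonpositive, so $|\pi_M^{\mathrm{GM}}| = \mu\eta\Delta \ge 0$. For the equality clause, $\mu\Delta>0$ strictly, so $\mu\eta\Delta = 0$ holds exactly when $\eta=0$. That is the whole ``if and only if.''

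As a cross-check, I would also compute $\pi_M$ directly rather than only through the zero-sum identity. An informed trader with $v=\vH$ buys at $\ask$, and the MM loses $\vH-\ask = (1-s)\Delta/2$ on that trade; the sell case is symmetric. A noise trader pays the half-spread $s\Delta/2$ on average. Weighting these gives
\[
\pi_M = -\mu(1-s)\Delta/2 + (1-\mu)s\Delta/2 = (s-\mu)\Delta/2,
\]
and substituting $s=\mu(1-2\eta)$ recovers $-\mu\eta\Delta$. I would also note the interpretation behind the number. At $\eta=0$ the quote reproduces the textbook zero-profit GM quote (\Cref{rem:sanity}), and the MM breaks even. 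For $\eta>0$ the spread shrinks by $2\mu\eta\Delta$, which is exactly the amount that goes uncollected. This is what justifies calling $|\pi_M^{\mathrm{GM}}|$ a transfer induced by the privacy mechanism: it is the gap between the loss at $\eta$ and the loss at $\eta=0$, and the latter is zero.

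The break-even statement is a bookkeeping step. If the protocol collects a per-trade fee $f$ on top of the quoted prices, its net per-trade expected profit is $f + \pi_M = f - \mu\eta\Delta$. Nonnegativity of this is equivalent to $f \ge \mu\eta\Delta$. Here I would assume the fee is a lump-sum addition that does not change traders' directions, which holds in the model because informed traders still profit whenever the fee is below $(1-s)\Delta/2$.

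The only step needing care is this fee assumption. Inflating the quotes by a fee could in principle change informed participation. The statement only asserts a necessary revenue threshold, though, so I would treat fees as an accounting transfer and not re-solve for equilibrium. Everything else is immediate from \Cref{lem:pl}.
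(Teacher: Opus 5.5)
Your proposal is correct and follows the paper's proof, which reads \eqref{eq:subsidy} directly off \eqref{eq:pi-M} in \Cref{lem:pl}; your appeal to $\mu\Delta>0$ for the ``only if'' direction and your accounting argument for the fee threshold fill in steps the paper leaves implicit, though your ``direct'' cross-check repeats the lemma's zero-sum arithmetic from the maker's side rather than verifying it independently. One slip in your interpretive aside: the per-trade amount that goes uncollected is the half-spread reduction $\mu\eta\Delta$, not the full spread reduction $2\mu\eta\Delta$, because each trade executes against only one side of the quote.
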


\begin{proof}
\eqref{eq:subsidy} is \eqref{eq:pi-M} of \Cref{lem:pl}; non-
negativity is immediate. Equality at $\eta = 0$ recovers textbook
GM zero-profit. \qed
\end{proof}

\section{Comparative statics}
\label{sec:cs}

\begin{proposition}[Comparative statics]
\label{prop:cs}
The privacy subsidy of \Cref{thm:welfare} satisfies:
\begin{enumerate}[leftmargin=*]
\item \emph{Linear growth}: $|\pi_M^{\mathrm{GM}}|$ is linear in
both $\mu$ and $\eta$, with
$|\pi_M^{\mathrm{GM}}| = \mu\eta\Delta$ for all $(\mu, \eta) \in
(0,1) \times [0, 1/2]$.

\item \emph{Spread vs subsidy trade-off}: the spread $\mu(1-2\eta)
\Delta$ decreases linearly in $\eta$ from $\mu\Delta$ (at
$\eta = 0$) to $0$ (at $\eta = 1/2$), while the subsidy increases
linearly from $0$ to $\mu\Delta/2$.

\item \emph{Informed-trader gain}: $\pi_I = (1-s)\Delta/2$
increases in $\eta$. Privacy noise raises informed-trader profit
by transferring spread reduction.
\end{enumerate}
\end{proposition}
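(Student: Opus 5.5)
The plan is to read each of the three items off the closed forms already established, so the whole argument reduces to inspecting explicit affine expressions in $\eta$ (and $\mu$). We take as inputs the spread $\mu(1-2\eta)\Delta$ of \Cref{thm:spread}, the subsidy $|\pi_M^{\mathrm{GM}}| = \mu\eta\Delta$ of \Cref{thm:welfare}, and the informed-trader payoff $\pi_I = (1-s)\Delta/2$ with $s = \mu(1-2\eta)$ from \Cref{lem:pl}. No new probabilistic computation is needed.

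For item 1, we substitute directly from \Cref{thm:welfare}. The map $(\mu,\eta)\mapsto \mu\eta\Delta$ is separately linear in each argument (bilinear jointly), holding the other fixed. The identity holds on all of $(0,1)\times[0,1/2]$ because \Cref{lem:pl} was derived with no restriction beyond these parameter ranges.

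For item 2, we note that $\eta\mapsto\mu(1-2\eta)\Delta$ is affine with slope $-2\mu\Delta<0$. Evaluating at the endpoints gives $\mu\Delta$ at $\eta=0$ and $0$ at $\eta=1/2$, consistent with \Cref{rem:sanity}. Likewise $\eta\mapsto\mu\eta\Delta$ has slope $\mu\Delta>0$, with endpoint values $0$ and $\mu\Delta/2$. It is worth recording the identity $\mathrm{spread} + 2|\pi_M^{\mathrm{GM}}| = \mu\Delta$, which makes the trade-off exact: each unit of spread reduction is matched by half a unit of subsidy.

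For item 3, we write $\pi_I = \tfrac{\Delta}{2}\bigl(1-\mu+2\mu\eta\bigr)$, so $\partial\pi_I/\partial\eta = \mu\Delta>0$ since $\mu>0$. This gives strict increase. For the interpretive clause ``by transferring spread reduction'', we would observe that the increment $\pi_I(\eta)-\pi_I(0)=\mu\eta\Delta$ equals exactly half of the spread reduction $2\mu\eta\Delta$. We would also note that $\pi_N$ rises by the same amount, and that $\mu\cdot\mu\eta\Delta+(1-\mu)\mu\eta\Delta=\mu\eta\Delta$ recovers the subsidy, in line with the zero-sum identity of \Cref{lem:pl}. There is no genuine obstacle here; the only care needed is to keep the informal wording of item 3 tied to this precise accounting rather than to an unproved claim.
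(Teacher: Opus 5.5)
Your proposal is correct and matches the paper's proof, which is likewise a direct calculation from the closed forms for the spread, $\pi_I$, and the subsidy. Your additional accounting (the identity $\mathrm{spread} + 2|\pi_M^{\mathrm{GM}}| = \mu\Delta$, and equal $\mu\eta\Delta$ increments for both trader types that reaggregate to the subsidy) is also correct and agrees with \Cref{rem:welfare-neutral}.
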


\begin{proof}
Direct calculation from \eqref{eq:spread}, \eqref{eq:pi-I},
\eqref{eq:subsidy}. \qed
\end{proof}

\begin{corollary}[Noise traders also benefit from privacy]
\label{cor:noise-helped}
$\partial \pi_N / \partial \eta = \mu\Delta > 0$ for all
$\eta \in [0, 1/2]$: the uninformed noise traders' expected loss
per trade is strictly decreasing in the privacy parameter.
\end{corollary}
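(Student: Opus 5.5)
The plan is to read $\pi_N$ directly off \Cref{lem:pl} and differentiate in $\eta$. By \eqref{eq:pi-N},
\[
\pi_N = -\tfrac{\mu(1-2\eta)}{2}\,\Delta = -\tfrac{\mu\Delta}{2} + \mu\eta\Delta .
\]
This is affine in $\eta$, so its derivative is constant:
\[
\partial \pi_N/\partial \eta = \mu\Delta .
\]
Strict positivity follows from the standing assumptions $\mu\in(0,1)$ and $\Delta = \vH-\vL>0$ of \Cref{sec:setup}. The derivative is independent of $\eta$, so the sign holds on all of $[0,1/2]$, with one-sided derivatives at the endpoints.

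For the interpretive claim, I would note that the noise trader's expected loss is $-\pi_N = \mu(1-2\eta)\Delta/2$. It is nonnegative on $[0,1/2]$ and has derivative $-\mu\Delta<0$, hence it is strictly decreasing. It falls from $\mu\Delta/2$ at $\eta=0$ to $0$ at $\eta=1/2$, consistent with the spread collapse in \Cref{rem:sanity}.

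As an optional cross-check, I would use the zero-sum identity of \Cref{lem:pl}:
\[
(1-\mu)\,\partial_\eta\pi_N = -\mu\,\partial_\eta\pi_I - \partial_\eta\pi_M = -\mu\cdot\mu\Delta + \mu\Delta = \mu(1-\mu)\Delta .
\]
This agrees with the direct computation and supports the narrative that the protocol bears the entire cost.

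There is no real obstacle. The only subtlety is that $\pi_N=-\mathrm{spread}/2$, as derived in the proof of \Cref{lem:pl}, depends only on the quoted spread and not on $v$. So the corollary reduces to the monotone decrease of the spread in \Cref{thm:spread}.
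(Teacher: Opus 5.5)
Your proposal is correct and matches the paper's proof, which simply differentiates \eqref{eq:pi-N} to obtain $\partial \pi_N/\partial \eta = \mu\Delta > 0$. Your positivity justification from $\mu \in (0,1)$ and $\Delta > 0$, and your zero-sum cross-check, are both consistent with this but not needed.
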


\begin{proof}
Differentiating \eqref{eq:pi-N},
$\partial \pi_N / \partial \eta = \mu\Delta > 0$. \qed
\end{proof}

\begin{sloppypar}
As in the Kyle / Gaussian-noise companion~\cite{nakamura2026privacysubsidy},
\emph{both} trader types gain from privacy and the protocol
bears the entire cost. Privacy reduces informed-trader rent
extraction per unit spread, but because the MM is committed to
the Bayesian rule, the rent loss from a narrower spread is borne
by the MM rather than transferred back to informed traders.
\end{sloppypar}

\begin{remark}[The privacy "gain" is gross-of-fees;
\Cref{cor:noise-helped} is welfare-neutral net-of-fees]
\label{rem:welfare-neutral}
\Cref{cor:noise-helped} is a \emph{gross-of-fees} statement
about the no-fee equilibrium of \Cref{thm:spread}. At the same
equilibrium, the per-trade break-even fee
$f = |\pi_M^{\mathrm{GM}}| = \mu\eta\Delta$ from
\Cref{thm:welfare}, charged flat on every trade, exactly cancels
each trader type's incremental gain over the $\eta = 0$
benchmark. Computing each side directly,
$\pi_I(\eta) - \pi_I(0) = \mu\eta\Delta$ and
$\pi_N(\eta) - \pi_N(0) = \mu\eta\Delta$, so net-of-fees,
\[
  \pi_I - f \;=\; \tfrac{1 - \mu}{2}\Delta
  \quad\text{and}\quad
  \pi_N - f \;=\; -\tfrac{\mu}{2}\Delta,
\]
both equal to their respective classical Glosten--Milgrom values
at $\eta = 0$. The MM is exactly compensated by the per-trade fee
revenue and returns to zero expected profit. Privacy is therefore
\emph{exactly welfare-neutral} under the per-trade break-even fee,
at the partial-equilibrium level (no-fee equilibrium order
arrivals, with fee revenue redistributed to the LP pool). The
full fee-equilibrium analysis, in which a per-trade fee may
distort the informed trader's decision to participate (informed
arrivals cease whenever the per-trade fee exceeds the realized
gain), remains open.
\end{remark}

\begin{remark}[Asymmetric prior left to future work]
\label{rem:asymmetric}
We restrict to symmetric prior $\pi = 1/2$. Asymmetric $\pi$
introduces additional dependence on the joint posterior
$P(v \mid \dtil)$ through
$P(\dtil = \text{buy}) = (1 + s(2\pi - 1))/2 \neq 1/2$, and the
exact closed-form spread and subsidy involve $\pi(1-\pi)$ factors
in the denominator. The qualitative result --- a spread that
shrinks with $\eta$ and a privacy subsidy borne by the protocol
--- is expected to persist, but the precise asymmetric formulae
are left to future work.
\end{remark}

\section{Application to privacy-preserving exchanges}
\label{sec:application}

The flip-noise channel of \Cref{sec:setup} models information loss
in privacy mechanisms that operate on the direction signal of an
arriving trade. We discuss two classes of applications that fit
the model cleanly, then delineate the boundary of applicability
following the discipline established in the Kyle / Gaussian-noise
companion~\cite{nakamura2026privacysubsidy}.

\subsection{Primary application: MPC matching with $\varepsilon$-leakage}
\label{ssec:mpc}

Multi-party-computation (MPC)-based matching engines aim to keep
trade direction private under cryptographic guarantees. When the
MPC protocol exposes a binary direction signal to the matching
engine at a controlled leakage rate --- e.g., via an
$\varepsilon$-differentially-private direction-disclosure
subroutine --- the leakage maps directly to a flip probability.
For the symmetric randomized-response mechanism that saturates the
$\varepsilon$-local-DP bound on the two-class direction
label~\cite{warner1965rr,dwork2014algorithmic}, the flip
probability is $\eta = 1/(1 + e^\varepsilon)$; a general
$\varepsilon$-DP mechanism need not be a symmetric channel, and the
asymmetric case is deferred to future work.

Under this interpretation, the matching engine (acting as the MM)
commits to using the disclosed signal for Bayesian pricing of the
trade. The closed-form spread $\mu(1 - 2\eta)\Delta$ and privacy
subsidy $\mu\eta\Delta$ of
\Cref{thm:spread,thm:welfare} then quantify the cost of the
$\varepsilon$-leakage budget in welfare terms.
\Cref{tab:mpc-fee} tabulates these values for representative
$\eta$.

\begin{table}[H]
\centering
\begin{tabular}{c|c|c}
$\eta$ & $\mathrm{spread}/\Delta = \mu(1-2\eta)$ &
   subsidy $|\pi_M|/\Delta = \mu\eta$ \\
\hline
$0$    & $\mu$ (textbook GM) & $0$ \\
$0.1$  & $0.8 \mu$ & $0.1 \mu$ \\
$0.25$ & $0.5 \mu$ & $0.25 \mu$ \\
$0.4$  & $0.2 \mu$ & $0.4 \mu$ \\
$0.5$  & $0$ (perfect privacy) & $0.5 \mu$ \\
\end{tabular}
\caption{Spread and privacy subsidy under binary-channel privacy
noise, in units of $\Delta$. The protocol's break-even fee floor
is the subsidy column.}
\label{tab:mpc-fee}
\end{table}

\subsection{Secondary: RFQ with side-channel direction inference}
\label{ssec:rfq}

In an idealized request-for-quote (RFQ) design, a dealer prices
without knowing direction; defensive symmetric quotes (a wide
spread) are the standard response. Our flip-noise model does not
apply to this idealized case.

The model does apply, however, to RFQ implementations where a
side channel admits binary direction inference at a controlled
error rate $\eta$. Concrete instances include: an order routing
layer that encrypts direction but leaks a binary signal via
timing or metadata; a dealer-side classifier that infers direction
from trader identity or order patterns at known accuracy
$1 - \eta$; or a committed-mechanism implementation that
deliberately exposes a noised direction label for pricing
fairness. The motivation is more contrived than the MPC case
above, and the boundary between "side channel" and "real
privacy violation" is venue-specific.

This use case is distinct from the Suave-style sealed-bid
order-flow auctions discussed as out-of-scope in the Kyle /
Gaussian-noise companion paper~\cite{nakamura2026privacysubsidy},
which create \emph{temporal} information asymmetry (delayed
reveal) rather than \emph{channel} noise on direction.

\subsection{Mechanisms outside our framework}
\label{ssec:outside}

Three nearby privacy designs do \emph{not} fit our flip-noise
framework and require separate analysis.

\paragraph{Batched aggregation (Penumbra-style).}
Batched-swap designs reveal the exact aggregate of all directions
within a batch window, not a noisy version of any individual
direction. From the matching engine's perspective the aggregate
is a sufficient statistic for the directional information
contained in the batch, and Bayesian pricing on this aggregate
does not generate the kind of flip-channel coarsening we study
here. We conjecture the per-trade subsidy under Bayesian-AMM
pricing on aggregated batches is zero, by analogy with the
Gaussian-flow argument in~\cite{nakamura2026privacysubsidy}; a
discrete-aggregate proof is left to future work.

\paragraph{Sealed-bid with delayed reveal (Suave-style).}
Sealed-bid order-flow auctions create \emph{temporal} information
asymmetry --- the MM observes flow after a delay rather than a
noised version of it. The adverse-selection mechanism is closer
to Loss-Versus-Rebalancing (LVR) than to additive direction
flipping; a continuous-time analysis with explicit time-lag is
required.

\paragraph{Oracle-pegged crossings.}
Designs that match peer orders at an external lit-exchange
midpoint via MPC consume no direction signal at the on-chain
mechanism level. Such oracle-pegged designs fall outside any
GM-style analysis, since pricing is exogenous.

\section{Connection to Loss-Versus-Rebalancing}
\label{sec:lvr}

The privacy subsidy in GM $|\pi_M^{\mathrm{GM}}| = \mu \eta
\Delta$ joins the same conceptual family as
\emph{Loss-Versus-Rebalancing} (LVR), the closed-form
continuous-time AMM adverse-selection cost identified by Milionis
et al.~\cite{milionis2022lvr}, and as the Kyle / Gaussian-noise
privacy subsidy of \cite{nakamura2026privacysubsidy}. All three
are closed-form, per-period welfare quantities measuring an
adverse-selection cost borne by an automated pricing mechanism in
the presence of informed traders. The source of the cost varies
across the three: LVR captures cost from stale prices in
continuous-time AMMs; the Gaussian privacy subsidy captures cost
from Gaussian observation noise; the present discrete privacy
subsidy captures cost from binary-channel direction flipping.

\section{Conclusion}
\label{sec:conclusion}

We derived a closed-form bid-ask spread and welfare decomposition
for the Glosten-Milgrom 1985 sequential-trading model under
binary-symmetric-channel privacy noise on the market maker's
observation of trade direction, with a committed Bayesian-AMM
pricing rule. The equilibrium spread is $\mu(1-2\eta)\Delta$ and
the protocol bears a per-trade welfare loss --- the
\emph{privacy subsidy} --- of $\mu \eta \Delta$. The result
extends the privacy-subsidy concept of
\cite{nakamura2026privacysubsidy} from continuous Gaussian to
discrete two-state microstructure.

\paragraph{Future work.}
Asymmetric flip channels with
$\eta_{\text{buy}\to\text{sell}} \neq \eta_{\text{sell}\to\text{buy}}$
extend the binary symmetric setup. Multi-period sequential
analysis would connect to LVR via repeated GM rounds and may
yield a discrete analog of the additive LVR-plus-privacy
decomposition conjectured in~\cite{nakamura2026privacysubsidy}.
Mechanism design with endogenous $\eta$ treats the exchange
designer's privacy-utility trade-off. Lean 4 mechanization of
the closed-form results across the cluster is a natural
formal-verification target.

\bibliographystyle{splncs04}
\bibliography{references}

@article{glosten1985,
  author  = {Glosten, Lawrence R. and Milgrom, Paul R.},
  title   = {Bid, Ask and Transaction Prices in a Specialist Market with Heterogeneously Informed Traders},
  journal = {Journal of Financial Economics},
  volume  = {14},
  number  = {1},
  pages   = {71--100},
  year    = {1985}
}

@article{glosten1988,
  author  = {Glosten, Lawrence R. and Harris, Lawrence E.},
  title   = {Estimating the components of the bid/ask spread},
  journal = {Journal of Financial Economics},
  volume  = {21},
  number  = {1},
  pages   = {123--142},
  year    = {1988}
}

@article{easleyohara1992,
  author  = {Easley, David and O'Hara, Maureen},
  title   = {Time and the Process of Security Price Adjustment},
  journal = {Journal of Finance},
  volume  = {47},
  number  = {2},
  pages   = {577--605},
  year    = {1992}
}

@inproceedings{brahma-bayesian-mm,
  author    = {Brahma, Aseem and Chakraborty, Mithun and Das, Sanmay and Lavoie, Allen and Magdon-Ismail, Malik},
  title     = {A {B}ayesian Market Maker},
  booktitle = {Proceedings of the 13th ACM Conference on Electronic Commerce (EC 2012)},
  year      = {2012}
}

@article{das2005learning,
  author  = {Das, Sanmay},
  title   = {A Learning Market-Maker in the {G}losten--{M}ilgrom Model},
  journal = {Quantitative Finance},
  volume  = {5},
  number  = {2},
  pages   = {169--180},
  year    = {2005}
}

@article{huddart2001disclosure,
  author  = {Huddart, Steven and Hughes, John S. and Levine, Carolyn B.},
  title   = {Public Disclosure and Dissimulation of Insider Trades},
  journal = {Econometrica},
  volume  = {69},
  number  = {3},
  pages   = {665--681},
  year    = {2001}
}

@article{touzo2020information,
  author  = {Touzo, L{\'e}o and Marsili, Matteo and Zagier, Don},
  title   = {Information thermodynamics of financial markets: the {G}losten--{M}ilgrom model},
  journal = {Journal of Statistical Mechanics: Theory and Experiment},
  year    = {2021},
  note    = {arXiv:2010.01905}
}

@article{carmier2022thermo,
  author  = {Carmier, Pierre},
  title   = {Generalized second law of thermodynamics in the {G}losten--{M}ilgrom model},
  journal = {arXiv preprint arXiv:2209.15429},
  year    = {2022}
}

@article{zhu2014darkpools,
  author  = {Zhu, Haoxiang},
  title   = {Do Dark Pools Harm Price Discovery?},
  journal = {Review of Financial Studies},
  volume  = {27},
  number  = {3},
  pages   = {747--789},
  year    = {2014}
}

@article{buti2017darkpool,
  author  = {Buti, Sabrina and Rindi, Barbara and Werner, Ingrid M.},
  title   = {Dark Pool Trading Strategies, Market Quality and Welfare},
  journal = {Journal of Financial Economics},
  volume  = {124},
  number  = {2},
  pages   = {244--265},
  year    = {2017}
}

@article{milionis2022lvr,
  author  = {Milionis, Jason and Moallemi, Ciamac C. and Roughgarden, Tim and Zhang, Anthony Lee},
  title   = {Automated Market Making and Loss-Versus-Rebalancing},
  journal = {arXiv preprint arXiv:2208.06046},
  year    = {2022}
}

@article{warner1965rr,
  author  = {Warner, Stanley L.},
  title   = {Randomized Response: A Survey Technique for Eliminating Evasive Answer Bias},
  journal = {Journal of the American Statistical Association},
  volume  = {60},
  number  = {309},
  pages   = {63--69},
  year    = {1965}
}

@article{dwork2014algorithmic,
  author  = {Dwork, Cynthia and Roth, Aaron},
  title   = {The Algorithmic Foundations of Differential Privacy},
  journal = {Foundations and Trends in Theoretical Computer Science},
  volume  = {9},
  number  = {3--4},
  pages   = {211--407},
  year    = {2014}
}

@misc{nakamura2026privacysubsidy,
  author        = {Nakamura, Yuki},
  title         = {The Privacy Subsidy: Kyle's $\lambda$ under Noise-Perturbed Order-Flow Observation},
  year          = {2026},
  eprint        = {2605.15746},
  archivePrefix = {arXiv},
  primaryClass  = {cs.GT}
}

\end{document}